\def\t#1#2#3{#1\stackrel{#2}{\longrightarrow}{#3}}
\theoremstyle{plain}
\newtheorem{theorem}{Theorem}[section]
\newtheorem{lemma}[theorem]{Lemma}
\newtheorem{proposition}[theorem]{Proposition}
\theoremstyle{definition}
\newtheorem{example}[theorem]{Example}
\theoremstyle{remark}
\renewenvironment{proof}{\noindent{\it Proof}.}{\qed}
\title{Differential Transformations of Parabolic Second-Order Operators
in the Plane}
\author{S.P. Tsarev \thanks{The author was supported by the Russian
Foundation for Basic Research 06-01-00814.}
\\
Institute of Mathematics,\\
Siberian Federal University, \\
Svobodnyi avenue, 79 \\
660041, Krasnoyarsk. \\
e-mail: \texttt{sptsarev@mail.ru}
\and E. Shemyakova \thanks{The author was supported by the Austrian
Science Fund (FWF) under
project DIFFOP, Nr. P20336-N18.} \\
 Research Institute for Symbolic Computation, \\
 J. Kepler University,\\
 Altenberger Str. 69, \\
 Linz, Austria. \\
 e-mail: \texttt{kath@risc.uni-linz.ac.at}
}
\newcommand{\Sym}{\ensuremath \mathrm{Sym}}
\begin{document}
\maketitle

\begin{flushright}
    \emph{To Sergey Petrovich Novikov, as a development of one of his ideas.}
\end{flushright}

\bigskip

\section{Introduction}

The theory of transformations for hyperbolic second-order equations in the
plane, developed by Darboux, Laplace and Moutard, has many applications in
classical differential geometry~\cite{bianchi,Eis}, and beyond it in the theory
of integrable systems~\cite{fer-lap,NV}. These results, which were obtained
for the linear case, can be applied to non-linear Darboux-integrable
equations~\cite{2ZhS2001,anderson,forsyth,Gu}. In the last decade,
numerous generalizations of the classical theory have been developed.
Among them there are generalizations to the case of systems of
hyperbolic equations in the plane~\cite{Zh-St,St08,Tsarev99,ts05},
and generalizations to the case of
hyperbolic equations with more than two independent
variables~\cite{Athorne,ts06}. The non-hyperbolic case has been much less
investigated~\cite{LeRoux,Petren,pisati}.

Here, Darboux's classical results about
transformations with differential substitutions for hyperbolic equations
are extended to the case of parabolic equations.
Thus, consider for an arbitrary solution $u$ of the equation
\begin{equation} \label{op:L}
L u = 0, \quad L=D^2_{x} + a(x,y) D_x + b(x,y) D_y + c(x,y), \quad
b(x,y) \neq 0,
\end{equation}
some Linear Partial Differential Operator (LPDO) $M$ and a new
function $v(x,y)=Mu$. One can easily compute that in the generic
case $v$ satisfies an overdetermined system of linear differential
equations. However, there is some choice of $M$ which leads to only
one equation for $v$, namely, $L_1 v=0$, where $L_1$ is an operator
of the same form~(\ref{op:L}) allbeit with possibly different
coefficients $a_1(x,y)$, $c_1(x,y)$, $b_1\equiv b$. In this case we
say that we have a \emph{differential transformation} of operator
$L$ into operator $L_1$ with $M$, and denote this fact as
$\t{L}{M}{L_1}$. Also it is easy to notice that in this case there
must exist an operator $M_1$ such that the following equality holds:
\begin{equation} \label{main}
M_1 \circ L = L_1 \circ M \ ,
\end{equation}
that is the both parts of~(\ref{main}) define the left least
common multiple $lLCM(L,M)$ in the ring $K[D]=K[D_x,
D_y]$ of LPDOs in the plane.

For the case of hyperbolic operators of the form
\begin{equation} \label{op:L_hyperbolic}
L_H = D_{x}D_{y} + a(x,y) D_x + b(x,y) D_y + c(x,y)
\end{equation}
there are quite complete results on the possible
form of the operators $M$ that satisfy (\ref{main}) (see~\cite[Ch.
VIII]{Darboux2}): in the generic case the operator $M$ can be
determined (up to an arbitrary multiplier) from $M z_i=0$, $i=1,\ldots ,k$,
where $z_i(x,y)$ are independent solutions of $L_H z_i=0$. There are also
some degenerate cases. As was discovered by Darboux, one of those
degenerate cases is the classical Laplace transformation,
which is defined by the coefficients of operator~(\ref{op:L_hyperbolic})
only.
Relation~(\ref{main}) for the ``intertwining operator'' $M$ is widely
used in the study of integrability problems in two- and one-dimensional
cases~\cite{VSh93,BV00}.

In this paper, we prove general Theorem~\ref{prop:Dx+qDy+r:transf:exists}
that provides a way to determine transformations $\t{L}{M}{L_1}$ for
parabolic equations~(\ref{op:L}). It turned out (Theorem~\ref{th:decomp})
that transforming operators $M$ of some higher order
can be always represented
as a composition of some first-order operators that consecutively
define a series of transformations of the operators of the form~(\ref{op:L}).

Unlike the classical case of the Laplace and Moutard
transformations, the transformations considered in this paper are
not invertible. In this respect the problem in question is analogous
to the generic case that was considered in~\cite[Ch.
VIII]{Darboux2}) for operators~(\ref{op:L_hyperbolic}). As follows
from Theorems~\ref{prop:Dx+qDy+r:transf:exists},~\ref{th:decomp} for
parabolic operators~(\ref{op:L}) there are no degenerate cases like
Laplace transformations for \emph{arbitrary}
operators~(\ref{op:L_hyperbolic}): any differential transformation
of the operator (\ref{op:L}) can be determined by an operator $M$ of
the form~(\ref{fo:M1:det}). It is of interest to consider the
problem of the existence of an inverse
transformation~$\t{L_1}{N}{L}$. The order of the inverse may be
higher than the order of the initial transformation~$\t{L}{M}{L_1}$.
Examples show that the existence of such an inverse implies some
differential constrains on the coefficients of the initial
operator~$L$. In Sec.~\ref{sec:AthNimm} we show that these relations
can imply famous integrable equations, in particular, the Boussinesq
equation. This result is an analogue of
results~\cite{backes,fer-lap,Tz} for periodic chains of Laplace
transformations for the operators (\ref{op:L_hyperbolic}), which
also lead to integrable non-linear equations.

Authors are thankful to M.V. Pavlov for useful discussions.

\section{Basic Definitions and Auxiliary Results}
\label{sec:def}

Consider a field $K$ of characteristic zero with commuting derivations
$\partial_x, \partial_y$, and the ring of linear differential operators
$K[D]=K[D_x, D_y]$, where $D_x, D_y$ correspond to the derivations $\partial_x,
\partial_y$, respectively. In $K[D]$ the variables $D_x, D_y$ commute
with each other, but not with elements of $K$. For $a\in K$ we have
$D_i a = aD_i + \partial_i(a)$. Any operator $L \in K[D]$ has the form
$L = \sum_{i+j =0}^d a_{ij} D_x^i D_y^j$,
where $a_{ij} \in K$. The polynomial $\Sym_L =  \sum_{i+j
= d} a_{ij} X^i Y^j$ in formal variables $X, Y$ is called the (principal)
\emph{symbol} of $L$.

Below we assume that the field $K$ is differentially closed unless stated
otherwise, that is it contains solutions of (non-linear in the generic case)
differential equations with coefficients from $K$.

Let $K^*$ denote the set of invertible elements in $K$. For $L \in
K[D]$ and every $g \in K^*$ consider the gauge transformation $L
\rightarrow L^g = g^{-1} \circ L \circ g$. Then an algebraic
differential expression $I$ in the coefficients of $L$ is
(differential) \emph{invariant} under the gauge transformations (we
consider only these in the present paper) if it is unaltered by
these transformations. Trivial examples of invariants are the
coefficients of the symbol of an operator. A generating set of
invariants is a set using which all possible differential invariants
can be expressed.

\begin{theorem} \cite{Ibr02,movingframes}
The action of the gauge group on operators of
the form~\eqref{op:L} has the following generating system of invariants:
\begin{eqnarray*}
 I_1  &=& b \ , \\
 I_2 &=& c_{x}-a a_x/2
    -b a_y/2 -a_{xx}/2 \\
    && +(b_x a^2/4 - b_x c +b_x a_x/2 ) /b
  \ .
\end{eqnarray*}
\end{theorem}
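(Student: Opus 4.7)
The plan is to verify the invariance claim by direct computation and then to establish the generating property by reducing to a canonical form.

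First, I would work out the gauge action on the coefficients. Expanding $g^{-1}\circ L\circ g$ applied to an arbitrary function yields the transformation rules
\[
 a \,\mapsto\, a + 2g_x/g,\qquad b \,\mapsto\, b,\qquad c \,\mapsto\, c + g_{xx}/g + a\,g_x/g + b\,g_y/g,
\]
so invariance of $I_1=b$ is immediate. For $I_2$, I would set $h=g_x/g$, $k=g_y/g$, use the identity $g_{xx}/g = h_x+h^2$, and substitute into the formula. After expanding the products $(a+2h)(a_x+2h_x)$, $b_x(a+2h)^2/4$ and $b_x(c+h_x+h^2+ah+bk)$, every $g$-dependent contribution involving $h_{xx}$, $hh_x$, $a_xh$, $ah_x$, and $b_xk$ cancels pairwise. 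The only surviving piece is the residue $b(k_x - h_y)$, which vanishes because mixed partials of $\ln g$ commute: $(g_y/g)_x = (g_x/g)_y$.

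For the generating property I would reduce to a normal form. Since $K$ is differentially closed, we can choose $g$ with $g_x/g = -a/2$, which brings $L$ into the gauge $a=0$; this uses up the freedom in $h$. The residual gauge group consists of functions $g=g(y)$, and it acts on the normalized coefficient $\tilde c$ by $\tilde c\mapsto \tilde c + b\,g'(y)/g(y)$. In this normalization $I_2$ becomes $\tilde c_x - (b_x/b)\tilde c = b(\tilde c/b)_x$, and the residual action shifts $\tilde c/b$ by an arbitrary function of $y$ alone. Thus the pair $(b,\tilde c/b)$ determines the $L$-equivalence class modulo that shift, and any gauge-invariant differential polynomial in $a,b,c$ is expressible as a differential polynomial in $I_1$, $I_2$.

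The main obstacle is the bookkeeping in the invariance check: the $b_x$-weighted terms in the last piece of $I_2$ mix $b_xk$ with $b_xc_g$, and it is not obvious \emph{a priori} that these contributions conspire to produce exactly $b(k_x-h_y)$ rather than a messier residue. The reduction to normal form in the second step is routine once $a=0$ is achieved, but the claim that $(b, b(\tilde c/b)_x)$ differentially generate the whole invariant ring (and not merely form a functionally independent pair) should be justified via the standard moving-frames argument of the cited references rather than reproved here.
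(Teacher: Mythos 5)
The paper offers no proof of this theorem at all: it is imported wholesale from the cited references (Ibragimov; Shemyakova--Mansfield), so there is no in-paper argument to compare yours against. On its own merits your proposal is sound and follows the standard route. Your transformation rules are correct: expanding $g^{-1}\circ L\circ g$ with $h=g_x/g$, $k=g_y/g$ gives $a\mapsto a+2h$, $b\mapsto b$, $c\mapsto c+h_x+h^2+ah+bk$, and I have checked that after substitution into $I_2$ the residue is exactly $b(k_x-h_y)$, which vanishes by equality of mixed partials of $\ln g$ --- so the invariance half is a complete proof, not just a plan. Your reduction to the gauge $a=0$ with residual group $g=g(y)$, under which $I_2$ becomes $b(\tilde c/b)_x$ and the residual action shifts $\tilde c/b$ by an arbitrary function of $y$, is also correct and matches the normalization the paper itself uses in Proposition~\ref{prop:const_invar}.

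The one place where your argument is only a sketch is the final completeness claim, and you are right to flag it: showing that $(I_1,I_2)$ is a \emph{generating} set --- i.e.\ that every local differential invariant is a differential function of $I_1$, $I_2$ and their derivatives, not merely that the pair separates orbits --- is genuinely more than your normal-form computation delivers, because the invariantized coefficient $\tilde c$ is recovered from $I_2$ only through an $x$-integration and is therefore nonlocal in $I_2$. Closing that gap requires the replacement/recurrence machinery of the moving-frames method (or Ibragimov's direct classification), and deferring it to the cited sources is consistent with the level of rigor of the paper, which defers the entire theorem. If you wanted a self-contained proof you would need to add that step explicitly.
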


Note that if an operator~\eqref{op:L} has only constant coefficients
then $I_1$ is a constant and $I_2=0$. If the field of coefficients $K$
contains quadratures (differentially closed), it is easy to prove
the inverse statement:

\begin{proposition} \label{prop:const_invar} Let the field of coefficients
$K$ be differentially closed. The equivalence class of~\eqref{op:L} with
respect to gauge transformations contains an operator with constant
coefficients if and only if $I_1$ is a constant and $I_2=0$.
\end{proposition}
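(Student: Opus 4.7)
The plan is to treat the two directions separately. The forward implication is immediate: if some operator in the gauge class has constant coefficients, then evaluating $I_1$ and $I_2$ on that representative gives $I_1=b_1$ (constant) and $I_2=0$ (all partial derivatives in the formula vanish, including the $b_{1x}$ terms), and by gauge invariance these values coincide with $I_1(L)$ and $I_2(L)$.

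For the converse, the natural approach is to look for $g=e^{\phi}\in K^{*}$ directly by computing $L^{g}=g^{-1}\circ L\circ g$. Writing $D_x\circ g = g(D_x+\phi_x)$ and similarly for $D_y$, one finds
\begin{equation*}
L^{g}=D_x^{2}+(a+2\phi_x)D_x+bD_y+\bigl(c+a\phi_x+b\phi_y+\phi_{xx}+\phi_x^{2}\bigr).
\end{equation*}
Since $b_1=b$ already equals the constant $I_1$, the remaining task is to choose $\phi$ so that the other two coefficients become constants; the simplest choice is to try to kill them entirely. This forces the pair of first-order PDEs
\begin{equation*}
\phi_x=-a/2,\qquad b\phi_y=-c+a^{2}/4+a_x/2.
\end{equation*}

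The key step is to verify that this overdetermined system is compatible precisely under the hypothesis. Because $K$ is differentially closed, a solution $\phi$ exists as soon as the cross-derivatives $\phi_{xy}$ and $\phi_{yx}$ agree. Differentiating the first equation in $y$ gives $-a_y/2$; differentiating the second in $x$ (and using $b_x=0$ because $b$ is constant) gives $\bigl(-c_x+aa_x/2+a_{xx}/2\bigr)/b$. Equating the two and multiplying by $b$ produces exactly
\begin{equation*}
c_x-aa_x/2-ba_y/2-a_{xx}/2=0,
\end{equation*}
which is $I_2$ with its $b_x$-dependent correction terms removed — and these terms vanish under the assumption that $b$ is constant. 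Hence the compatibility condition is exactly $I_2=0$.

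The main obstacle is really just arithmetic: one has to carry out the conjugation of $L$ by $g=e^{\phi}$ cleanly and then check that the mixed-partial identity collapses to the given invariant expression rather than to some slightly different combination. Once this matching is verified, the conclusion follows at once, since differential closedness of $K$ guarantees a function $\phi$ with the prescribed first partials, and then $L^{g}=D_x^{2}+bD_y$ lies in the gauge class of $L$ and has constant coefficients.
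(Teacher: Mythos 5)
Your proof is correct and follows essentially the same route as the paper: both directions rest on the explicit computation of $L^{g}$ for $g=e^{\phi}$ and on the observation that, once $b$ is constant, $I_2=0$ is exactly the condition allowing the lower-order coefficients to be gauged away. The only cosmetic difference is that the paper performs two successive gauge transformations (first $\phi_x=-a/2$ to kill $a$, then a $y$-quadrature to kill the resulting $c=c(y)$), while you solve the combined first-order system for $\phi$ in one step and identify $I_2=0$ as its cross-derivative compatibility condition.
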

\begin{proof}
Let $I_1=b$ have a constant value and $I_2=0$. Consider
an operator $L=D_{x}^2 + aD_x + b D_y + c$ from the equivalence class.
Using the gauge transformation with $g=\exp\big(- \frac{1}{2} \int a\, dx\big)$
one can make $a=0$. Then $I_2=0$ implies $0 = c_{x} - b_x c / b$.
Since $I_1=b$ is a constant, we have $c=c(y)$.
Applying the gauge transformation with
$g = e^{\int -c/b dy}$ to $L$  we obtain $L^g=D_{x}^2+bD_y$,
which has constant coefficients.
\end{proof}

So every operator~\eqref{op:L} with constant
$I_1=b$ and $I_2=0$ can be transformed into operator $D_{x}^2+D_y$
using substitution $y \mapsto \textrm{const}\cdot y$ and
gauge transformations.

\begin{lemma} \label{le:p=1}
Without loss of generality one can divide the symbols
$\Sym(M)=\Sym(M_1)$ by any non-zero $g \in K$. The operator $L$ and
the symbol of $L_1$ are left unchanged.
\end{lemma}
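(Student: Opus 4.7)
The plan is to exploit the gauge freedom in~(\ref{main}). As a preliminary, I would first justify the implicit equality $\Sym(M)=\Sym(M_1)$: taking principal symbols of both sides of $M_1\circ L = L_1\circ M$ and using that symbols live in the commutative ring $K[X,Y]$, we get $\Sym(M_1)\cdot X^2 = X^2\cdot\Sym(M)$, so the two symbols coincide.

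The core construction is then a one-line algebraic manipulation. Given any $g\in K^*$, I would multiply~(\ref{main}) on the left by $g^{-1}$ and insert $g\circ g^{-1}$ after $L_1$:
\[
(g^{-1}\circ M_1)\circ L \;=\; (g^{-1}\circ L_1\circ g)\circ(g^{-1}\circ M).
\]
Setting $\widetilde M = g^{-1}\circ M$, $\widetilde M_1 = g^{-1}\circ M_1$, and $\widetilde L_1 = g^{-1}\circ L_1\circ g$, this is exactly the intertwining relation for a new transformation $\t{L}{\widetilde M}{\widetilde L_1}$ satisfying~(\ref{main}). By construction $\Sym(\widetilde M) = \Sym(M)/g$ and $\Sym(\widetilde M_1) = \Sym(M_1)/g$, so the common symbol of $M$ and $M_1$ has been divided by $g$ as required; the operator $L$ on the left is literally untouched.

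The only step requiring even a small verification is that $\widetilde L_1$ is again of the form~(\ref{op:L}) with the same symbol as $L_1$. But $\widetilde L_1 = L_1^g$ is a gauge transform, and a routine computation from $D_x\circ g = g\circ D_x + g_x$ and $D_y\circ g = g\circ D_y + g_y$ shows that conjugation by $g\in K^*$ affects only the coefficients of $D_x$, $D_y$ and the free term, leaving the leading $D_x^2$ (and hence the symbol $X^2$) intact. No genuine obstacle arises here; the lemma is essentially a normalization statement, and the argument is nothing more than rescaling both sides of~(\ref{main}) by $g^{-1}$.
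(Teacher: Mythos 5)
Your proof is correct and follows essentially the same route as the paper: multiply~(\ref{main}) on the left by $g^{-1}$, insert $g\circ g^{-1}$ to recognize $L_1^g$, and observe that left multiplication by $g^{-1}$ divides the symbols of $M$, $M_1$ by $g$ while the gauge transformation leaves $\Sym(L_1)$ intact. Your extra verifications (that $\Sym(M)=\Sym(M_1)$ follows from comparing principal symbols in~(\ref{main}), and that conjugation by $g$ preserves the leading term) are details the paper leaves implicit.
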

\begin{proof} Indeed, multiply the both sides of~\eqref{main}
by $1/g$ on the left: $\frac{1}{g} M \circ L = \frac{1}{g} L_1 g \circ \frac{1}{g}  M_1 =
 L_1^g  \circ \frac{1}{g}  M_1$.
Then ``new'' $M$ and $M_1$ have the coefficients of the ``old''
ones divided by $g$, while $L_1$ is subjected to the gauge transformation
with $g$, and, therefore,
its symbol is unchanged, while the other coefficients can be changed.
\end{proof}

\begin{lemma}[Simplification by gauge transformations]
\label{le:a=0}
In~\eqref{main} one can assume without loss of
generality that $a=0$, that is there exists a gauge transformation
that transforms $L$, $M$, $L_1$ and $M_1$ into operators of the same
form such that the coefficient of $L$ at $D_x$ is $0$, and the equality
$M \circ L = L_1 \circ M_1$~\eqref{main} is preserved.
\end{lemma}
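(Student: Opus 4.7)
The plan is to apply one and the same gauge transformation $X \mapsto g^{-1} \circ X \circ g$ simultaneously to all four operators in~\eqref{main}, with $g \in K^*$ chosen so that the resulting $\tilde L$ has vanishing coefficient at $D_x$. The conceptual point is that simultaneous conjugation by a common $g$ is compatible with composition in $K[D]$: inserting $g \circ g^{-1}$ between each adjacent pair of factors and pre/post-multiplying by $g^{-1}$ and $g$ turns $M \circ L = L_1 \circ M_1$ into
\[
(g^{-1}\!\circ\! M\!\circ\! g)\circ(g^{-1}\!\circ\! L\!\circ\! g) \;=\; (g^{-1}\!\circ\! L_1\!\circ\! g)\circ(g^{-1}\!\circ\! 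M_1\!\circ\! g),
\]
so~\eqref{main} carries over automatically to $\tilde M, \tilde L, \tilde L_1, \tilde M_1$ without any further work.

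Next, I would choose $g$ exactly as in the proof of Proposition~\ref{prop:const_invar}, namely $g = \exp\!\big(-\frac{1}{2}\int a\, dx\big) \in K^*$, which exists because $K$ is differentially closed. Using $g^{-1} \circ D_x \circ g = D_x + g_x/g$ and the analogous relation for $D_y$, a direct term-by-term computation on $L = D_x^2 + a D_x + b D_y + c$ yields
\[
g^{-1}\circ L \circ g \;=\; D_x^2 + (2 g_x/g + a)\, D_x + b\, D_y + \tilde c,
\]
so that with this $g$ one has $2 g_x/g + a = 0$ and the $D_x$-coefficient of $\tilde L$ is indeed zero.

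It then remains to check that the four transformed operators retain the forms required by the statement. Scalar conjugation preserves principal symbols, hence $\tilde L$ and $\tilde L_1$ are again of the form~\eqref{op:L} (the coefficient at $D_y$ in $\tilde L$ even stays literally equal to $b$, since the lower-order corrections coming from $g^{-1}\circ D_x^2 \circ g$ only affect the $D_x$-term and the free term), and $\tilde M, \tilde M_1$ remain elements of $K[D]$. There is no real obstacle in the proof; the only substantive point beyond bookkeeping is the insistence on using the \emph{same} $g$ on both sides of~\eqref{main}, which is precisely what makes the telescoping cancellation in the first paragraph go through. Note that this procedure does not simultaneously kill the $D_x$-coefficient $a_1$ of $L_1$ (it replaces it by $a_1 - a$), but the statement only asks for $a = 0$ in $L$.
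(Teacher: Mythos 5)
Your proof is correct and follows essentially the same route as the paper: conjugate all four operators in~\eqref{main} by the single element $g=\exp\bigl(-\tfrac{1}{2}\int a\,dx\bigr)$, which telescopes through the composition and kills the $D_x$-coefficient of $L$ while preserving all principal symbols. You simply spell out the bookkeeping that the paper leaves implicit.
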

\begin{proof} It is enough to apply the gauge
transformation with $g=\exp(- \frac{1}{2} \int a \, dx)$
to all operators in~\eqref{main}. This gauge transformation do not
alter the symbols of the operators, and, therefore, does not interfere
with the simplifications from Lemma~\ref{le:p=1}.
\end{proof}

\section{First-Order Transformations}
\label{sec:first_order_transfs}

Consider $L$ of the form~\eqref{op:L} and an operator $L_1$ of the same
form: $L_1 = D_{x}^2 + a_1(x,y) D_x + b_1(x,y) D_y + c_1(x,y)$. Then
a differential transformation of the first-order that transforms $L$ into $L_1$
exists if there exist
\begin{eqnarray*}
  M &=& p(x,y) D_{x} + q(x,y) D_y + r(x,y) \ ,\\
  M_1 &=& p_1(x,y) D_{x} + q_1(x,y) D_y + r_1(x,y)
\end{eqnarray*}
such that~(\ref{main}) holds. The comparison of the symbols implies
$p_1 = p$, $q_1=q$.

First consider the \emph{{case $\mathbf{p \neq 0}$, $\mathbf{q \neq
0}$ }}.

By lemma~\ref{le:p=1} without loss of generality one can assume $p=1$,
and $a=0$ by lemma~\ref{le:a=0}. Equating the coefficients
in~(\ref{main}) we have
  $a_1 = - 2 \frac{q_x}{q}$,
  $b_1 = b$,
  $c_1 = (-2 b q_x + b_x q + q^2 c + q^2 b_y
           + 2 q_x^2
           - b q_y q
           -q_{xx}q)/q^2$,
  $r_1 = r  - 2 (\ln q)_x$,
and two constrains on the coefficients of the operators $L$ and $M$:
$C_1 = 0$, $C_0 = 0$, where
\begin{eqnarray}
  C_0 &=& - 2 q c q_x
          + c_x q^2
          + q^3 c_y
          + 2 r b q_x
          - r b_x q
          - r q^2 b_y
          - 2 r q_x^2
          + \nonumber \\
      &&  + r b q_y q
          + r q_{xx} q
          + 2 q_x r_x q
          - b r_y q^2
          - r_{xx} q^2 \ ,
          \label{eq:C0} \\
  C_1 &=& -2 b q_x
          + b_x q
          + q^2 b_y
          + 2 q_x^2
          - b q_y q
          -q_{xx} q
          -2 q_x r q
          +2 r_x q^2 \ . \label{eq:C1}
\end{eqnarray}

We see from (\ref{eq:C0}), (\ref{eq:C1}) that
given the coefficients of the operator $L$, one can always find solutions
$r$, $q$ of these equations in the differentially closed field $K$,
that is every operator~(\ref{op:L}) admits
infinitely many transformations with different operators~$M$.
The equations (\ref{eq:C0}), (\ref{eq:C1}) for $r$, $q$ can be solved
explicitly with the help of two arbitrary (independent)
generic solutions of the equation~(\ref{op:L}). Indeed,
given a first-order operator $M$ that satisfies the constrain~(\ref{main}),
the following system of equations
\begin{equation}\label{SSU}
\left\{  \!\!   \begin{array}{l}
       Lu  =0, \\
       Mu  =0,
     \end{array}\right.
\end{equation}
is consistent and has a two-dimensional space of
solutions, which is parameterized, for example, by the
values $u(x_0,y_0)$, $u_y(x_0,y_0)$. In fact, we
can express the derivatives of $u$ of any order
with respect to $x$ in terms of its derivatives
with respect to $y$ from the second equation $Mu =0$.
Substituting those into the first equation $Lu =0$, we have an expression
for the second derivative $u_{yy}$, provided $q \neq 0$. On the other hand
the consistency of~(\ref{SSU}) is guaranteed by~(\ref{main}), which
can be rewritten as $q D_y Lu- D_x^2 Mu =0 \mod (L,M)$.
Conversely, a basis $z_1(x,y)$, $z_2(x,y)$ in the space of
solutions of~(\ref{SSU}) allows us to reconstruct $M$: the conditions $M z_1=0$,
$M z_2=0$ give a system of two linear algebraic equations for
the coefficients $r$, $q$, and we can easily determine the operator $M$:
\begin{equation} \label{M1ord}
Mu = \left|
  \begin{array}{ccc}
    u   & u_y & u_x \\
    z_1 & (z_1)_y & (z_1)_x \\
    z_2 & (z_2)_y & (z_2)_x \\
  \end{array}
\right|\cdot\left|
  \begin{array}{cc}
    z_1 & (z_1)_y \\
    z_2 & (z_2)_y \\
  \end{array}
\right|^{-1}.
\end{equation}
Since the values $z_i(x_0,y_0)$, $(z_i)_y(x_0,y_0)$ are lineally independent,
the denominator of this expression is non-zero.

Vice versa, the choice of two arbitrary lineally independent
solutions $z_1$, $z_2$ of the equation~(\ref{op:L}) defines
the operator $M$ by the formula~(\ref{M1ord}). The operator $M$ in its turn
implies a differential transformation of $L$, that is
the equality~(\ref{main}). Indeed,
compute the derivatives $v_x$,  $v_y$, $v_{xx}$ of the function $v=Mu$
for an arbitrary solution $u$ of the equation~(\ref{op:L}),
then using~(\ref{op:L}) we can remove all the terms that contain
$u_{xx}$, $u_{xxx}$, $u_{xxy}$. Using an
appropriate combination
$\tilde Lv=v_{xx}+a_1(x,y)v_{x}+b_1(x,y)v_{y}$ we can also remove
the terms with $u_{xy}$, $u_{yy}$, leaving $u_{x}$, $u_{y}$, $u$ only.
Since the expression~$\tilde Lv$ vanishes after
the substitution $u=z_i$ it must be proportional to $Mu$:
$\tilde Lv=\tilde L(Mu)=c_1(x,y)Mu$, which implies (\ref{main}) with
$L_1=\tilde L-c_1$ for an arbitrary function $u(x,y)$.

Note that in the considered case the coefficients at $D_y$, $D_x$ in
$M$ are non-zero. From now on we refer to such transformations as
$X+qY$-transformations. Below we consider the cases
when one or another of the coefficients is zero separately.
Therefore, we will prove the following statement:

\begin{theorem} \label{prop:Dx+qDy+r:transf:exists}
For every operator $L = D_{x}^2 + a D_y + b D_y + c$ there exist
infinitely many differential transformations
with operators $M =D_{x} + q(x,y) D_y + r(x,y)$. If $q\neq 0$
then the operator $M$ is defined by the conditions
$M z_1=0$, $M z_2=0$, where $z_i$ are two arbitrary chosen
independent solutions of the equation~(\ref{op:L}). The
operators of the form~$M = D_{x} + r(x,y)$ are defined by the choice of one
solution $z_1$ of the equation~(\ref{op:L}) and by the condition $M z_1=0$.
The intertwining operator of the form $M = D_{y} + r(x,y)$
does not exist for generic $L$.
\end{theorem}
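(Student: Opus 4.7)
The plan is to split into three sub-cases based on the vanishing of $p$ and $q$ in $M=pD_x+qD_y+r$, namely ($p=1, q\neq 0$), ($p=1, q=0$), and ($p=0, q=1$). By Lemma~\ref{le:p=1} I normalize $p=1$ whenever $p\neq 0$, and by Lemma~\ref{le:a=0} I take $a=0$ in $L$ throughout. The case $p=q=0$ is excluded since $M$ would then be multiplicative.

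The first sub-case is essentially settled by the discussion preceding the theorem. Given two independent solutions $z_1,z_2$ of $Lu=0$, formula~\eqref{M1ord} defines $M$; to verify~\eqref{main} I compute $v_x,v_y,v_{xx}$ for $v=Mu$, eliminate $u_{xx},u_{xxx},u_{xxy}$ via $Lu=0$, and form $\tilde L v=v_{xx}+a_1v_x+b_1v_y$ with $a_1,b_1$ chosen so that the remaining $u_{xy}$ and $u_{yy}$ terms cancel. Since $\tilde L v$ vanishes identically on $z_1,z_2$, it must be proportional to $Mu$, producing $L_1=\tilde L-c_1$ of the required form. Varying the pair $z_1,z_2$ over the infinite-dimensional solution space of $Lu=0$ yields infinitely many such $M$.

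For the sub-case $M=D_x+r$, constraints~\eqref{eq:C0}--\eqref{eq:C1} degenerate (they reduce to $0=0$ when $q=0$) and must be re-derived. I will expand $M_1\circ L=L_1\circ M$ from scratch with $M_1=D_x+r_1$; matching the coefficients of $D_xD_y$, $D_x^2$, $D_y$, $D_x$ yields the explicit formulas $b_1=b$, $a_1=-b_x/b$, $r_1=r-b_x/b$, and $c_1=c-2r_x+rb_x/b$. The zero-order comparison leaves a single nonlinear PDE for $r$, and the key step is to recognize this PDE as the compatibility condition for the overdetermined first-order system $z_{1,x}=-rz_1$, $z_{1,y}=-(r^2-r_x+c)z_1/b$, which is equivalent to $Lz_1=0$. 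Hence setting $r=-(\ln z_1)_x$ for any solution $z_1$ of $Lu=0$ produces a valid transformation, and conversely every such $r$ determines $z_1$ up to a multiplicative constant.

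For the sub-case $M=D_y+r$, a direct expansion of $M_1\circ L=L_1\circ M$ with $M_1=D_y+r_1$ shows that the coefficients of $D_xD_y$ and $D_x$ force $a_1=0$ and $r_x=0$ (so $r=r(y)$); the remaining coefficient matches give $b_1=b$, $r_1=r$, $c_1=c+b_y$, and the zero-order comparison reduces to $c_y=(br)_y$, i.e., $c(x,y)-b(x,y)r(y)$ must be a function of $x$ alone. For generic $L$ no $r(y)$ satisfies this together with $r_x=0$, so such $M$ does not exist generically. The main obstacle I anticipate is the algebraic verification in the $M=D_x+r$ case that the residual scalar PDE on $r$ is precisely the compatibility condition for the first-order system, and hence equivalent to $Lz_1=0$; this involves somewhat careful bookkeeping with $b_x/b$ terms but is conceptually routine.
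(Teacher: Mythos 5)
Your proposal is correct and follows essentially the same route as the paper: the same three-case split under the normalizations $p=1$ and $a=0$ from Lemmas~\ref{le:p=1} and~\ref{le:a=0}, the same coefficient-matching in~\eqref{main}, the same parametrization of $M$ by solutions $z_i$ of $Lz=0$ via~\eqref{M1ord} and~\eqref{M1ordx}, and the same generic obstruction ($r_x=0$ together with $c_y=(br)_y$) ruling out $M=D_y+r$. The only variation is in the $M=D_x+r$ sub-case, where you solve the residual constraint~\eqref{conds:Dx+r} by recognizing it as the Frobenius compatibility condition of the system $z_{1,x}=-rz_1$, $z_{1,y}=-(r^2-r_x+c)z_1/b$ instead of using the paper's elimination-and-proportionality argument; this checks out and yields the same conclusion $r=-(\ln z_1)_x$.
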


The degenerate cases of operators $M$ of forms $M=D_x+r$ and $M=D_y+r$
are considered below.

 \emph{Case $\mathbf{p \neq 0}$, $\mathbf{q = 0}$
($M=D_x+r$)}

Without loss of generality one can assume $p=1$ and $a=0$.
If we equate the corresponding coefficients in~(\ref{main}), we have
  $a_1 = - \ln(b)_x$,
  $b_1 = b$,
  $c_1 = c+r \ln(b)_x -2 r_{x}$,
  $r_1 = r - \ln(b)_x$
and an equation
\begin{equation} \label{conds:Dx+r}
0 = - c \ln(b)_x
    + c_x
    - r^2 \ln(b)_x
    + 2 r r_x
    + r_x \ln(b)_x
    - b r_y
    - r_{xx} \ ,
\end{equation}
for $r$. We apply the same trick as in the non-degenerate case in order
to determine the operator $M$ in terms of solutions of
the initial equation~(\ref{op:L}). Now we choose \emph{one} solution
$z_1$ and require $M$ to satisfy the condition $M z_1=0$. We get
\begin{equation} \label{M1ordx}
M(u)= \left|
  \begin{array}{cc}
    u   & u_x \\
    z_1 & (z_1)_x \\
  \end{array}
\right|\cdot z_1^{-1} \ .
\end{equation}
Indeed, given an operator $M$ such that the intertwining equality~(\ref{main})
holds, an appropriate $z_1$ is found as a solution of the consistent
system~(\ref{SSU}), which now has a one-dimensional solution space.

Conversely, given a solution $z_1$ of the equation (\ref{op:L}),
$M$ can be found from~(\ref{M1ordx}), then for $v=Mu$ the
derivatives $v_x$, $v_y$, $v_{xx}$ are simplified using~(\ref{op:L}).

Then an appropriate combination
$\tilde Lv=v_{xx}+a_1(x,y)v_{x}+b_1(x,y)v_{y}$ contains only $u_{x}$ and $u$
(there are no terms with $u_{yy}$!). The obtained expression
$\tilde Lv$ vanishes if we substitute~$u=z_1$ and therefore it must be
proportional to $Mu$, which implies~(\ref{main}).

Later on we refer to such transformations as $X$-transformations.

\emph{Case $\mathbf{p = 0}$, $\mathbf{q \neq 0}$ ($M=D_y+r$)}

Without loss of generality we can assume $q=1$, $a=0$. If we equate
the corresponding coefficients in~(\ref{main}), we obtain in
particular  $r_x =  0$, $ c_y- r b_y- b r_y=  0$. Thus, $r=r(y)$ can
be found only for some particular functions $b, c$ and for an
arbitrarily chosen $L =D_{x}^2 + a D_x + b D_y + c$ there is no
differential transformations with $M=D_y + r$.

Notice also that an attempt to construct $M$ by the formula
$$M(u)= \left|
  \begin{array}{cc}
    u   & u_y \\
    z_1 & (z_1)_y \\
  \end{array}
\right|\cdot z_1^{-1}.
$$
would not lead to any success either: for such an operator
$M$ and $v=Mu$ the derivatives $v_x$, $v_y$, $v_{xx}$
simplified with~(\ref{op:L}) would contain
$u_{xy}$, $u_{yy}$, $u_{x}$, $u_{y}$,  $u$, and we cannot not
find an appropriate combination
$\tilde Lv=v_{xx}+a_1(x,y)v_{x}+b_1(x,y)v_{y}$ having only
$u_{y}$, $u$.

Therefore, Theorem~\ref{prop:Dx+qDy+r:transf:exists} is proved.

Note that when differential transformations with
$M=D_x+qD_y+r$ are applied to the operator~\eqref{op:L}, the
new values of the basic invariants
(that is the values of invariants $I_1$ and $I_2$ for $L_1$)
are
\begin{eqnarray*}
 I_1^1  &=& I_1 = b \ , \\
 I_2^1 &=& I_2 -2 b q_{xx}/q^2
                 - b_x^2/(qb)
                 - b_x b_y /b
                 + b_{xx}/q
                 + b_{xy}
                 - b_x q_x/q^2
                 + 4 q_x^2 b /q^3 \ .
\end{eqnarray*}

When differential transformations with
$M=D_x+r$ are applied the new values of the basic invariants are
\begin{eqnarray*}
 I_1^1  &=& I_1 = b \ , \\
 I_2^1 &=& I_2-1/4  (
                       8 b^3  r_{xx}
                       - 12  b_{x}  r_{x}  b^2
                       +8  r  b  b_{x}^2 \\
    &&                 -4  r  b^2  b_{xx}
                       +2  b^2  b_{y}  b_{x}
                       -9  b_{x}^3
                       - 2  b^2  b_{xxx}
                       - 10  b_{x}  b  b_{xx}-2  b^3  b_{xy})/b^3 \ .
\end{eqnarray*}

\begin{example} Consider an operator
\[
L= D_{xx} + \frac{2x+2y}{x^2} D_y - \frac{2}{x^2} \ .
\]
The equation $L(z)=0$ has the following solutions $z_1 = x^2$, $z_2 = x+y$.
Using the determinantal formula~\eqref{M1ord} compute
\[
M = D_x + \frac{x+2y}{x}  D_y -
 \frac{2}{x}  \ ,
\]
and $M_1 = D_x + \frac{x+2y}{x}  D_y- \frac{2}{x+2y}$,
\[
L_1 = D_{x}^2 - \frac{4y}{x(x+2y)} D_x + (\frac{2}{x}+
\frac{2y}{x^2})D_y - \frac{6}{(x+2y)x}- \frac{4y}{(x+2y)x^2} \ .
\]
Note that $L_1$ cannot be obtained from $L$ by any
gauge transformation. Indeed, the value of the invariant
$I_2$ for $L$ is $I_2 = \frac{2}{x^2(x+y)}$, while the value
of $I_2$ for $L_1$ is $I_2^1 =\frac{2(x^2-2xy-4y^2)}{x(x+y)(x+2y)^3}$.
\end{example}

\begin{example} \label{ex:cc2dxdy}
Applying the differential transformation with $M=D_x + q(x,y) D_y +
r(x,y)$ to $L=D_{x}^2+D_y$ (provided conditions~\eqref{eq:C0} and
\eqref{eq:C1} are satisfied or equivalently, provided $M$ is
in the form~(\ref{M1ord})) we have
$M_1=D_x + q(x,y) D_y + r - 2 (\ln q)_x$ and
\begin{eqnarray*}
L=D_{x}^2+D_y  &\qquad \longrightarrow  \qquad& L_1 = D_{x}^2
-2q_x/q D_x + D_y +
(q_y q+q_{xx} q - 2q_x^2+ 2q_x)/q^2 \ , \\
I_2=0 &\qquad \longrightarrow  \qquad& I_2^1= - 2q_{xx}/q^2
                +4q_x^2/q^3 \ .
\end{eqnarray*}
\end{example}

\begin{example} \label{ex:cc2dx}
Applying the differential transformation with $M=D_x + r(x,y)$ to
$L=D_{x}^2+D_y$ (provided the condition~\eqref{conds:Dx+r} is
satisfied or equivalently, provided $M$ is in the form(\ref{M1ordx}))
we have $M_1=M$ and
\begin{eqnarray*}
L=D_{x}^2+D_y  &\qquad \longrightarrow  \qquad&
L_1 = D_{x}^2 + D_y - 2 r_x \ , \\
I_2=0 &\qquad \longrightarrow  \qquad& I_2^1= - 2r_{xx}\ .
\end{eqnarray*}
\end{example}

\section{Transformations of Arbitrary Order}
\label{sec:darboux}

We show that differential transformations of arbitrary order
of a generic operator~\eqref{op:L} can be expressed
in terms of some number of partial solutions of~\eqref{op:L}.
In~\cite[Ch.VIII]{Darboux2} analogous formulae were introduced for
hyperbolic operators~\eqref{op:L_hyperbolic}.

First of all, given some transforming operator $M$ of
higher order
satisfying~(\ref{main}), we can use the operator $L$ to remove
all terms having derivatives with respect to
$y$ (generally speaking, this manipulation increases the order of $M$).
The resulting operator has the form
\begin{equation}\label{MMM}
     M= \sum_{i=0}^{m}q_{i}(x,y)D_x^i, \quad q_m \neq 0 \ .
\end{equation}
Below we call the corresponding transformation an
\emph{$(m)$-transformation}.

\begin{theorem} Given an operator~\eqref{op:L} and $m$
lineally independent generic partial solutions $z_1, \dots , z_{m}$
of the corresponding equation $L(z)=0$, then there exists a
differential transformation with
\begin{equation} \label{fo:M1:det}
Mu = \vartheta(x,y) \left|
  \begin{array}{cccc}
    u      & \frac{\partial u}{\partial x} & \dots &
    \frac{\partial^m u}{\partial x^m}  \\
    z_1      & \frac{\partial z_1}{\partial x} & \dots &
    \frac{\partial^m z_1}{\partial x^m}  \\
    \vdots & \vdots       & \vdots& \vdots  \\
    z_m      & \frac{\partial z_m}{\partial x} & \dots &
    \frac{\partial^m z_m}{\partial x^m}
  \end{array}
\right|
\end{equation}
where $\vartheta(x,y)$ is arbitrary. Conversely, every
$(m)$-transformation of an operator of the form~\eqref{op:L} corresponds
to some operator $M$ of the form~\eqref{fo:M1:det}.
\end{theorem}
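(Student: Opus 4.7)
The plan is to prove both directions of the theorem by extending the first-order argument of Section~\ref{sec:first_order_transfs}. For the existence direction, I define $M$ by~\eqref{fo:M1:det} and verify two facts: first, $Mz_i=0$ for each $i$, since substituting $u=z_i$ produces a determinant with two identical rows; second, expanding along the top row shows $M$ has $x$-order exactly $m$, with leading $D_x^m$-coefficient equal to $\vartheta(x,y)$ times the $x$-Wronskian of $z_1,\ldots,z_m$, which is nonzero for generic choices.

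Next I construct $L_1$ and $M_1$ realizing~\eqref{main}. Set $v=Mu$ and compute $v_x,v_y,v_{xx}$ by formal differentiation. Using~\eqref{op:L} in the form $bu_y=-(u_{xx}+au_x+cu)$ and iterating, I eliminate every mixed derivative $\partial_x^i\partial_y^j u$ with $j\geq 1$, expressing $v$, $v_x$, $v_y$, $v_{xx}$ as $K$-linear combinations of the pure $x$-derivatives $u,u_x,\ldots,u^{(m+2)}$. Seeking $\tilde L=D_x^2+a_1 D_x+b_1 D_y+c_1$ with $\tilde L v\equiv 0\pmod{(L)}$, the coefficient of $u^{(m+2)}$ forces $b_1=b$, and matching the coefficients of $u^{(m+1)}$ and $u^{(m)}$ determines $a_1$ and $c_1$ from the coefficients of $M$ and $L$. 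These are the direct analogues of~\eqref{eq:C0} and~\eqref{eq:C1}.

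With $a_1,b_1,c_1$ so chosen, the reduced remainder $\tilde L v\pmod{(L)}$ becomes an operator of $x$-order at most $m-1$ acting on $u$. It vanishes when $u=z_i$, because then $v=Mz_i=0$; but any operator $\sum_{j=0}^{m-1}\alpha_j D_x^j$ annihilating the $m$ functions $z_1,\ldots,z_m$ whose $x$-Wronskian is nonzero must itself be the zero operator. This yields $\tilde L\circ M=M_1\circ L$ for a unique $M_1$ of order $m$, and $L_1=\tilde L$ has the form~\eqref{op:L}, as required.

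For the converse, given any $(m)$-transformation with $M$ in the form~\eqref{MMM}, the system $\{Lu=0,\,Mu=0\}$ is consistent by~\eqref{main} and has an $m$-dimensional solution space, since the Cauchy data $u|_{y=y_0}$ for $Lu=0$ is only further constrained by the $m$-th order ODE $Mu|_{y=y_0}=0$ in $x$. Choosing a basis $z_1,\ldots,z_m$ of this space, formula~\eqref{fo:M1:det} defines an operator $\tilde M$ of $x$-order $m$ that also annihilates each $z_i$; since two operators of the same $x$-order sharing an $m$-dimensional kernel differ only by a $K$-scalar multiplier, one can absorb this scalar into $\vartheta$ to obtain $M=\tilde M$. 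The main technical obstacle is the bookkeeping of the $y$-derivative elimination in the forward direction: one must verify that after zeroing out the three leading coefficients in $\tilde L v\pmod{(L)}$, the remainder has $x$-order strictly less than $m$ in $u$, so that the Wronskian argument applies to force the lower coefficients to vanish.
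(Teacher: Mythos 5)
Your argument follows the paper's own proof in essentially all respects: the forward direction eliminates $y$-derivatives via $Lu=0$, kills the two top-order terms by the choice of $b_1=b$ and $a_1$, and uses the nonvanishing Wronskian of $z_1,\dots,z_m$ to identify the remainder with a multiple of $Mu$ (you normalize the $u^{(m)}$ coefficient by $c_1$ and conclude the rest vanishes, which is a trivial rephrasing of the paper's proportionality argument); the converse likewise matches the paper's use of the consistency of the system $\{Lu=0,\,Mu=0\}$ and its $m$-dimensional solution space. The proposal is correct and takes the same route, only spelling out a few steps the paper leaves implicit.
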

\begin{proof} Having computed the derivatives $v_x$,  $v_y$, $v_{xx}$ of
$v=Mu$ for an arbitrary solution $u$ of equation~(\ref{op:L}), we
use~(\ref{op:L}) as above to remove all terms that contain
derivatives with respect to $y$. The remaining terms will contain
only some linear combinations of the derivatives
$D_x^s u$, $s=0,\ldots , m+2$.
Choosing some appropriate combination
$\tilde Lv=v_{xx}+a_1(x,y)v_{x}+b_1(x,y)v_{y}$ we can remove terms
with $D_x^{m+2} u$, $D_x^{m+1} u$, and leave terms with
$D_x^s u$,  $s=0,\ldots , m$ only. Since the resulting
expression $\tilde Lv$ vanishes when we substitute any
$u=z_i$, we conclude that it must be proportional to
$Mu$: $\tilde Lv=\tilde L(Mu)=c_1(x,y)Mu$,
which implies~(\ref{main}) with $L_1=\tilde L-c_1$ for
an arbitrary function $u(x,y)$. The only requirement is the non-vanishing
of the Wronskian $\det(D_x^j z_i)$, $i=1, \ldots , m$, $j=0,
\ldots , m-1$.

Conversely, given the intertwining operator $M$ of the form (\ref{MMM})
satisfying~(\ref{main}), consider the system~(\ref{SSU}). The consistency
of the system is equivalent to~(\ref{main}), which allows us
to choose a basis of its $m$ solutions with
non-vanishing
Wronskian $\det(D_x^j z_i)$, $i=1,
\ldots , m$, $j=0, \ldots , m-1$, and obtain the required
form~(\ref{fo:M1:det}) of the operator $M$.
\end{proof}

\begin{theorem}  \label{th:decomp}
An arbitrary $(m)$-transformation of an operator (\ref{op:L}) with $m>1$
can be represented as a composition of first-order differential
transformations.
\end{theorem}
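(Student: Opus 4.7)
The plan is to proceed by induction on $m$, with the base case $m=1$ immediate since a first-order transformation is already of the required form. For the inductive step, given an $(m)$-transformation $\t{L}{M}{L_1}$, I would first reduce $M$ modulo $L$ so that it takes the form~(\ref{MMM}) as a polynomial in $D_x$ alone, and represent it via the Wronskian formula~(\ref{fo:M1:det}) in terms of $m$ independent solutions $z_1,\ldots,z_m$ of $Lu=0$. The goal is to peel off one first-order $X$-transformation on the right and apply induction to what remains.

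Pick a generic $z_1$ and form the first-order operator $M' = D_x - (z_1)_x/z_1$, which by Theorem~\ref{prop:Dx+qDy+r:transf:exists} realizes an $X$-transformation $\t{L}{M'}{L'}$ with companion $M_1'$ satisfying $M_1'\circ L = L'\circ M'$. Viewing $K[D_x]$ as an Ore ring with $y$ as a parameter, right Euclidean division by the monic operator $M'$ yields $M = \tilde M\circ M' + R$ with $\ord_x R < 1$, so $R\in K$; evaluating both sides at $u = z_1$ and using $Mz_1 = 0$ (from the determinantal form) together with $M'z_1 = 0$ forces $Rz_1 = 0$, hence $R = 0$. This produces the factorization $M = \tilde M\circ M'$ with $\ord_x \tilde M = m-1$.

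To complete the induction, I would verify that $\tilde M$ realizes an $(m-1)$-transformation of $L'$. Setting $\tilde z_i = M' z_i$ for $i = 2,\ldots,m$, the intertwining gives $L' \tilde z_i = M_1' L z_i = 0$, and $\tilde M \tilde z_i = \tilde M M' z_i = M z_i = 0$; a generic choice of $z_1$ keeps the $\tilde z_i$ linearly independent, as otherwise a nontrivial combination $\sum_{i\geq 2} c_i z_i$ would lie in $\ker M'$ and hence be proportional to $z_1$, contrary to genericity. Since $\tilde M$ has $D_x$-order $m-1$ and annihilates these $m-1$ independent solutions of $L'$, it must agree up to a left scalar factor with the determinantal operator~(\ref{fo:M1:det}) built from $\tilde z_2, \ldots, \tilde z_m$ over $L'$, and therefore defines an $(m-1)$-transformation $\t{L'}{\tilde M}{L''}$. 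Chaining the two intertwinings yields $(\tilde M_1\circ M_1')\circ L = L''\circ M$; comparing with $M_1\circ L = L_1\circ M$ identifies $L'' = L_1$. Applying the induction hypothesis to $\tilde M$ then expresses $M = \tilde M\circ M'$ as a composition of first-order transformations. The main obstacle lies in this last step: one must justify that the purely algebraic right division in $K[D_x]$ not only factors $M$ as an operator but yields a factor $\tilde M$ which genuinely intertwines $L'$ with the original target $L_1$, which in turn rests on the generic linear independence of the $M'z_i$ and on the uniqueness of the left quotient in the intertwining relation.
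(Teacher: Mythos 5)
Your proposal is correct and follows essentially the same route as the paper: peel off the first-order $X$-transformation generated by $z_1$, factor $M=\tilde M\circ M'$ by right Euclidean division in $K[D_x]$ using $Mz_1=M'z_1=0$, show the quotient intertwines the intermediate operator with $L_1$, and induct on $m$. The only (minor) difference is in how the quotient is certified as an intertwiner: the paper deduces $N_1\hat L=L_1P$ directly from divisibility of $lLCM(L,M)$ by $lLCM(L,\hat M)$, whereas you re-derive it from the transformed solutions $\hat z_i=M'z_i$ via the determinantal theorem and then identify the target with $L_1$ by cancelling $M$ on the right, which is legitimate since $K[D]$ has no zero divisors.
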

\begin{proof} Consider an operator $M$ in the form~(\ref{fo:M1:det})
and the corresponding solutions $z_i$. Then $z_1$ generates a
first-order transformation with $\hat M$ of the form~(\ref{M1ordx}),
which transforms $L$ into some $\hat L$ of the same
form~(\ref{op:L}). Others $z_i$, $i=2, \ldots , m$ are transformed
into solutions $\hat z_i=\hat M z_i$ of the equation $\hat L \hat z
=0$. Since $Mz_1=0$, $\hat Mz_1=0$, then if we divide the ordinary
differential operator $M$ by $\hat M$, the remainder is zero: $M=P
\hat M$, $P \in K[D_x]$. (\ref{main}) implies that the operator
$lLCM(L, M)=M_1 L=L_1M$ is divisible by $lLCM(L,\hat M)=\hat M_1 L =
\hat L \hat M$, that is $M_1 L=N_1\hat M_1 L=N_1 \hat L \hat
M=L_1M=L_1P\hat M$, which implies $N_1 \hat L=L_1P$. Thus we have
obtained an intertwining operator $P$, whose order is less by one,
such that $\t{\hat L}{P}{L_1}$. The induction by the order $m$ of
the intertwining operator completes the proof.
\end{proof}

\section{Generalized Moutard Transformations
and Differential Transformations. Periodical Differential Transformations}
\label{sec:AthNimm}

An important subclass of the considered class of the parabolic operators
are operators
\begin{equation} \label{moutard}
L=D_{x}^2-D_y + c(x,y) \ .
\end{equation}

In~\cite{athorne_Nimmo_moutard}, a modification of Moutard transformations
for such operators was suggested and applications to the construction
of solutions in the Kadomtsev---Petviashvili (KP) hierarchy of equations
were given. As we show below, some of the examples considered
in~\cite{athorne_Nimmo_moutard} can also be obtained by our method.
Direct application of the above results proves the following lemma.

\begin{lemma} \label{le:moutard:X-tranf:basics}
 $X$-transformations preserve the class of the operators~\eqref{moutard}.
For $M=D_x + r(x,y)$ the condition~\eqref{conds:Dx+r} for the
existence of such transformations has the following form:
\begin{equation} \label{moutard:Dx+r:conds}
c_x+2 r r_x + r_y - r_{xx} = 0 \ ,
\end{equation}
and
\begin{equation} \label{moutard:Dx+r:L1}
M_1 = M \ , \quad L_1 = D_{x}^2 - D_y + c-2 r_{x} \ .
\end{equation}
The basic invariant $I_2$ transforms as follows:
$I_2= c_{x} \ \longrightarrow  \ I_2^1= c_x - 2 r_{xx}$.
If the operator $M$ is given in the form~(\ref{M1ordx}) for some
partial solution $z_1=z_1(x,y)$ of the equation $L(z)=0$, we have
\[
L_1 = D_{x}^2 - D_y - \frac{2z_{1x}^2-z_{1} z_{1y} -z_{1xx}
z_{1}}{z_{1}^2} \ .
\]
\end{lemma}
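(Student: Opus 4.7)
The plan is to derive all the claims by directly specializing the formulas of the $X$-transformation case ($p\neq 0$, $q=0$) from Section~\ref{sec:first_order_transfs} to the subclass~\eqref{moutard}, which corresponds to $a=0$ and $b=-1$.

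Since $b=-1$ is constant, $\ln(b)_x=0$. The general expressions $a_1=-\ln(b)_x$, $b_1=b$, $r_1=r-\ln(b)_x$ reduce immediately to $a_1=0$, $b_1=-1$, $r_1=r$, so $L_1$ lies in the class~\eqref{moutard} and $M_1=M$. The formula $c_1=c+r\ln(b)_x-2r_x$ reduces to $c_1=c-2r_x$, yielding~\eqref{moutard:Dx+r:L1}. The constraint~\eqref{conds:Dx+r} collapses to $c_x+2rr_x+r_y-r_{xx}=0$ under the same substitution, giving~\eqref{moutard:Dx+r:conds}. For the invariant, the general formula in Section~\ref{sec:def} gives $I_2=c_x$ when $a=0$, $b=-1$, $b_x=0$; applying it to $L_1$, whose zero-order coefficient is $c-2r_x$, yields $I_2^1=c_x-2r_{xx}$.

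For the closed-form expression of $L_1$ in terms of $z_1$, I would take $M$ normalized so that $Mu=u_x+ru$ with $Mz_1=0$, forcing $r=-z_{1x}/z_1$ and hence $r_x=-z_{1xx}/z_1+z_{1x}^2/z_1^2$. The equation $Lz_1=0$ reads $z_{1xx}-z_{1y}+cz_1=0$, which I would solve for $c$ to obtain $c=(z_{1y}-z_{1xx})/z_1$. Substituting both into $c_1=c-2r_x$ and collecting over $z_1^2$ produces exactly $-(2z_{1x}^2-z_1z_{1y}-z_{1xx}z_1)/z_1^2$. The main obstacle, if one can call it that, is purely bookkeeping: reconciling the sign convention in~\eqref{M1ordx} (which fixes $M$ only up to a scalar, by Lemma~\ref{le:p=1}) with the chosen normalization $M=D_x+r$, and then carefully using $Lz_1=0$ to eliminate the unknown $c$ in favor of $z_1$ and its derivatives. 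No new ideas beyond those already established in Section~\ref{sec:first_order_transfs} are required.
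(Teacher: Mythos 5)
Your proposal is correct and follows exactly the route the paper intends: the paper's entire ``proof'' of this lemma is the sentence ``Direct application of the above results proves the following lemma,'' i.e.\ precisely the specialization $a=0$, $b=-1$ (so $\ln(b)_x=0$) of the $q=0$ formulas from Section~\ref{sec:first_order_transfs}, together with $r=-z_{1x}/z_1$ and the elimination of $c$ via $Lz_1=0$. All of your computations check out, including the sign bookkeeping for $r$ coming from~\eqref{M1ordx} up to the scalar allowed by Lemma~\ref{le:p=1}.
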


Note that $X+qY$-transformations do not preserve
the class of operators~\eqref{moutard}:

\begin{example} The equation $(D_{x}^2 - D_y)z=0$ has partial solutions $z_1 = x$, $z_2
= e^{x+y}$. The formula~(\ref{M1ord}) implies
$M = D_x + \left(\frac{1}{x}-1 \right) D_y - \frac{1}{x}$
and
$L_1 = D_{x}^2 - \frac{2}{x(x-1)} D_x -D_y - \frac{2}{x(x-1)}$.
However, the gauge transformation with $g-(x-1)/x$
reduces $L_1$ to the form~(\ref{moutard}):
$L_1^g=D_{x}^2 - D_y-\frac{2}{(x-1)^2}$.
\end{example}
This example and the one below show that classical examples of
functions $c(x,y)$ obtained in~\cite{athorne_Nimmo_moutard} can also
be obtained by the application of one or several differential
transformations. Actually, both approaches can be considered as
two-dimensional generalizations of Darboux transformations for the
one-dimensional Schr\"odinger operator $D_x^2 - c(x,y)$.

\begin{example} Consider a differential transformation of $L=D_{x}^2+D_y$
with $M=D_x + r(x,y)$. Choosing $r=1/2-\tanh(x+y)$ satisfying the
condition~(\ref{moutard:Dx+r:conds}) of the existence of the
transformation, we have
\begin{eqnarray*}
L=D_{x}^2+D_y  &\qquad \longrightarrow  \qquad&
L_1 = D_{x}^2 + D_y + \frac{2}{\cosh(x+y)^2} \ , \\
I_2=0 &\qquad \longrightarrow  \qquad& I_2^1= - \frac{ 4
\sinh(x+y)}{\cosh(x+y)^3} \ .
\end{eqnarray*}
\end{example}

Now we study the invertibility of a given transformation
$\t{L}{M}{L_1}$,
that is, the possibility of finding a transformation $\t{L_1}{N}{L}$,
possibly of higher order.
\begin{example} $X$-transformation of the
operator $L=D_{x}^2-D_y - x^4 + 2x$
with $M=D_x+ x^2$ results in the following operator: $L_1 = D_{x}^2-D_y -
x^4 - 2x$. This transformation has the inverse $X$-transformation
with $N=D_x- x^2$.
\end{example}

As the simplest examples show, an inverse transformation does not
exist for a generic operator $L$. In fact the existence of an
inverse transformation implies a system of constrains on the
coefficients of $L$. In some cases, it produces known integrable
equations. First, Theorem~\ref{th:decomp} implies that the existence
of an inverse transformation, that is the existence of a composition
$\t{L}{N\cdot M}{L}$, is equivalent to the existence of a
transformation $P=N\cdot M$ of higher order that transforms the
operator $L$ into itself: $P_1 \cdot L = L \cdot P$. For
operators~(\ref{moutard}) the existence of such an operator implies
a particular case of the standard problem of classification of Lax
pairs: for $P$ of order one or two of the form~(\ref{MMM}) this
leads to potentials $c(x,y)$ of simple form; the existence of an
operator~$P=p_3(x,y)D_x^3 + p_2(x,y)D_x^2 + p_1(x,y)D_x + p_0(x,y)$
of the third order implies $P_1=P$ and $P=4 D_x^3 + 6cD_x +
p_0(x,y)$ (up to some simple transformations) and the system
\begin{equation}\label{UB}
    \left\{\begin{array}{ll}
      (p_0)_x & = 3(c_y +c_{xx}),\\
      (p_0)_y & = 3c_{xy} -6cc_x -c_{xxx},
    \end{array}\right.
\end{equation}
that is the famous Boussinesq equation for $c$:
$$c_{yy} = -(c^2+c_{xx}/3)_{xx}.
$$
The system (\ref{UB}) coincides with the well-known representation
(\cite[formula (7)]{KN80}) for the Kadomtsev-Petviashvili equation
in the stationary case $U_t=0$, which gives the Boussinesq equation.

\bibliographystyle{plain}

\end{document}